\newcommand{\ket}[1]{\left| #1 \right>} 
\newcommand{\bra}[1]{\left< #1 \right|} 
\newcommand{\ruleT}{\ensuremath{\mathbf{T}}}
\newcommand{\ruleS}{\ensuremath{\mathbf{S}}}
\newcommand{\ruleSi}{\ensuremath{\mathbf{S1}}}
\newcommand{\ruleSii}{\ensuremath{\mathbf{S2}}}
\newcommand{\ruleBp}{\ensuremath{\mathbf{B'}}}
\newcommand{\ruleBi}{\ensuremath{\mathbf{B1}}}
\newcommand{\ruleBii}{\ensuremath{\mathbf{B2}}}
\newcommand{\ruleKi}{\ensuremath{\mathbf{K1}}}
\newcommand{\ruleKii}{\ensuremath{\mathbf{K2}}}
\newcommand{\ruleC}{\ensuremath{\mathbf{C}}}
\newcommand{\ruleDi}{\ensuremath{\mathbf{D1}}}
\newcommand{\ruleDii}{\ensuremath{\mathbf{D2}}}
\newcommand{\ruleA}{\ensuremath{\mathbf{A}}}
\newcommand{\ruleE}{\ensuremath{\mathbf{E}}}
\newcommand{\eruleT}{\ensuremath{(\ruleT)}\xspace}
\newcommand{\eruleS}{\ensuremath{(\ruleS)}\xspace}
\newcommand{\eruleSi}{\ensuremath{(\ruleSi)}\xspace}
\newcommand{\eruleSii}{\ensuremath{(\ruleSii)}\xspace}
\newcommand{\eruleBp}{\ensuremath{(\ruleBp)}\xspace}
\newcommand{\eruleBi}{\ensuremath{(\ruleBi)}\xspace}
\newcommand{\eruleBii}{\ensuremath{(\ruleBii)}\xspace}
\newcommand{\eruleKi}{\ensuremath{(\ruleKi)}\xspace}
\newcommand{\eruleKii}{\ensuremath{(\ruleKii)}\xspace}
\newcommand{\eruleC}{\ensuremath{(\ruleC)}\xspace}
\newcommand{\eruleDi}{\ensuremath{(\ruleDi)}\xspace}
\newcommand{\eruleDii}{\ensuremath{(\ruleDii)}\xspace}
\newcommand{\eruleA}{\ensuremath{(\ruleA)}\xspace}
\newcommand{\eruleE}{\ensuremath{(\ruleE)}\xspace}
\newcommand{\zxcalculus}{\textsc{zx}-calculus\xspace}
\def\bc{\begin{center}}
\def\ec{\end{center}}
\newcommand{\bit}{\begin{itemize}}
\newcommand{\eit}{\end{itemize}\par\noindent}
\newcommand{\ben}{\begin{enumerate}}
\newcommand{\een}{\end{enumerate}\par\noindent}
\newcommand{\beq}{\begin{equation}}
\newcommand{\beqq}{\begin{equation}\hspace{-1.2cm}}
\newcommand{\eeq}{\end{equation}\par\noindent}
\newcommand{\beqx}{\begin{equation*}}
\newcommand{\beqqx}{\begin{equation*}\hspace{-1.2cm}}
\newcommand{\eeqx}{\end{equation*}\par\noindent}
\newcommand{\beqa}{\begin{eqnarray*}}
\newcommand{\eeqa}{\end{eqnarray*}\par\noindent}
\newcommand{\beqn}{\begin{eqnarray}}
\newcommand{\eeqn}{\end{eqnarray}\par\noindent}
\def\PICT{\begin{center}{\Large Picture:} }
\def\EPICT{\end{center}}
\newcommand{\inlinegraphic}[2]{
  \dimendef\grafheight=255\dimendef\grafvshift=254
  \grafheight=#1
  \grafvshift=-0.5\grafheight
  \advance\grafvshift by 0.5ex
  \raisebox{\grafvshift}{\includegraphics[height=\grafheight]{Fig/#2}\xspace}
}
\newcommand{\ninlinegraphic}[2][1.0]{
  \dimendef\grafheight=255\dimendef\grafvshift=254
  \setbox0 = \hbox{\scalebox{#1}{\includegraphics{Fig/#2}}}
  \grafheight=\the\ht0
  \grafvshift=-0.5\grafheight
  \advance\grafvshift by 0.5ex
  \raisebox{\grafvshift}{\includegraphics[height=\grafheight]{Fig/#2}\xspace}
}
\newcounter{counter}
\newtheorem{lemma}[counter]{Lemma}
\newtheorem{corollary}[counter]{Corollary}
\newtheorem{definition}{Definition} 
\title{Superdense Coding with GHZ and Quantum Key Distribution with $W$ in the \zxcalculus }
\author{Anne Hillebrand\\
 \institute{Oxford University Computing Laboratory\\ Oxford, UK}
\email{anne.hillebrand@maths.ox.ac.uk}
}
\date{\today}
\begin{document}

\maketitle

\begin{abstract}
Quantum entanglement is a key resource in many quantum protocols, such as quantum teleportation and quantum cryptography. Yet entanglement makes protocols presented in Dirac notation difficult to verify. This is why Coecke and Duncan have introduced a diagrammatic language for quantum protocols, called the \zxcalculus \cite{intro3}. This diagrammatic notation is both intuitive and formally rigorous. It is a simple, graphical, high level language that emphasises the composition of systems and naturally captures the essentials of quantum mechanics. In the author's MSc thesis \cite{ik} it has been shown for over 25 quantum protocols that the \zxcalculus provides a relatively easy and more intuitive presentation. Moreover, the author embarked on the task to apply categorical quantum mechanics on quantum security; earlier works did not touch anything but Bennett and Brassard's quantum key distribution protocol, BB84. Two of the protocols in \cite{ik}, namely superdense coding with the Greenberger-Horne-Zeilinger state and quantum key distribution with the $W$-state, will be presented in this paper. 
\end{abstract}

\section{Introduction}

Quantum protocols are usually described in Dirac notation. Though such a presentation is adequate, it is low-level and therefore not a very intuitive formalism. The passage to a high level language was realized in \cite{AbramskyCoecke}, by relying on the compositional structure of monoidal categories. Corresponding presentations result in the form of quantum picturalism in \cite{ kindergarten,intro2, intro3,  intro1}, which relies on the diagrammatic presentation of symmetric monoidal categories, tracing back to Penrose \cite{Penrose, JoyalStreet,Selinger}.

This diagrammatic notation is both intuitive and formally rigorous. It is a simple, graphical, high level language that emphasises the composition of systems and naturally captures the essentials of quantum mechanics. One crucial feature that will be exploited here is the encoding of complementary observables and corresponding phase shifts. Reasoning is done by rewriting diagrams, i.e.\ locally replacing some part of a diagram, according to a few simple rules. Diagrams are defined by their topology only; the number of inputs and outputs and the way they are connected. This exemplifies the `flow' of information. 

Entanglement, described by Einstein as ``spooky action at a distance'', is a key resource in many quantum protocols, like quantum teleportation and quantum cryptography. Yet entanglement makes protocols presented in Dirac notation difficult to verify. In the author's MSc thesis \cite{ik} an alternative presentation in the \zxcalculus of quantum protocols involving the GHZ or the $W$-state is considered. Over 25 different protocols are discussed, such as teleportation, superdense coding (SDC), quantum key distribution (QKD) and quantum direct communication. Moreover, the author has embarked on the task to apply quantum category theory on quantum security; earlier works did not touch anything but BB84 \cite{Bennett_Brassard_1984} in \cite{AbramskyCoecke}. Two of the protocols in \cite{ik}, SDC and QKD, will be discussed in this paper. SDC was first introduced by Bennett and Wiesner in \cite{PhysRevLett.69.2881}. In an SDC protocol a number of classical bits is transfered by transferring fewer qubits. In this paper  SDC with GHZ as described in \cite{superdenseGHZ, multidenseteleport} will be presented in the \zxcalculus. Then a QKD protocol making use of the $W$-state will be presented. In QKD protocols a key is shared with two or more people in such a way that they can only retrieve the key if they work together. The first quantum key distribution protocol was proposed by Bennett and Brassard in \cite{Bennett_Brassard_1984} in 1984 and is generally referred to as the BB84 protocol. After that many other protocols have been proposed \cite{PhysRevLett.67.661,PhysRevLett.68.557, PhysRevLett.68.3121, PhysRevLett.84.4729, PhysRevLett.94.230504, secretschemes1, improvsecretschemes, secretschemes,secret1, secret2}. In this paper QKD as described in \cite{comW} will be discussed. 

To show that a protocol ``works'', the following definitions will be used.
\begin{definition}
 A quantum protocol consists of two parts, the {\bf set of instructions} and the {\bf desired behaviour}. The set of instructions are the things to be done to achieve the desired behaviour, i.e the goal of the protocol.
\end{definition}

\begin{definition}
 A quantum protocol is considered to be {\bf correct} or {\bf valid} if the set of instructions leads to the desired behaviour. 
\end{definition}

This paper is organised as follows. First the \zxcalculus will be introduced in Sec.\ \ref{Sec:redgreen}. Then SDC with GHZ will be presented in Sec.\ \ref{Sec:GHZDC}, after which QKD with $W$ will be discussed in Sec. \ref{comw}. Finally, some concluding remarks will be made in Sec. \ref{conclusions}.


\section{\label{Sec:redgreen} The Red/Green Calculus}

In this section the red/green calculus or the \zxcalculus is introduced. For a more thorough and complete presentation see \cite{Aleks:2009, entanglement, ik} or \cite{intro3}\footnotemark.  
\footnotetext{Pictures from this paper are included with permission of the author.}

\subsection{Components of the \zxcalculus}
The \zxcalculus consists of components joined by wires, like an electrical circuit. Its components are the following:
\begin{enumerate}
 \item  Z-vertices (green dots), labeled by an angle $\alpha \in [0,2 \pi)$, called the phase, with any number of inputs and or outputs, zero included.
\item X-vertices (red dots), complementary to the Z-vertices, labeled by a phase, also with any number of inputs, including none.
\item H-vertices (yellow squares labeled with an H), which represent Hadamard gates. They have exactly one input and one output.
\item$\sqrt{D}$-vertices (black diamonds), which represent scalars. These generally don't have any inputs or outputs.
\end{enumerate}
The Hilbert space interpretation of these components is as as follows:

  \begin{center}
\qquad 
 $\ninlinegraphic{small-id} =  
      \left(\begin{array}{cc}
          1&0\\0&1
        \end{array}\right)$  
\qquad \quad 
  $\inlinegraphic{3.2em}{hadamard} =
      \frac{1}{\sqrt{2}}\left(
        \begin{array}{cc}
          1 &1 \\ 1 & -1
        \end{array}
      \right)$ 
\qquad\quad
      $ \ninlinegraphic{small-sigma} = 
      \left(\begin{array}{cccc}
          1&0&0&0\\0&0&1&0\\0&1&0&0\\0&0&0&1
        \end{array}\right)$
\\[2em] 
   $\ninlinegraphic{small-cap}=\ket{00}+\ket{11}$
\qquad \quad      
      $\underbrace{\overbrace{\ninlinegraphic{zx-gens-greenspider-alpha}}^n}_m::
      \left\{
      \begin{array}{ccr}
        \overbrace{\ket{0\,\ldots \,0}}^n 
        &\!\! \mapsto\!\! 
        &\ \ \ \ \, \overbrace{\ket{0\,\ldots\, 0}}^m\\
        \ket{1\,\ldots \,1} &\!\! \mapsto\!\! & e^{i\alpha}\ket{1\,\ldots\, 1}\\
        \mbox{others}&\!\! \mapsto\!\! & 0\hspace{1.0cm}
      \end{array}
    \right.$
      \\[1.6em]
\qquad 
  $\ninlinegraphic{small-cup}=\bra{00}+\bra{11}$
 \qquad \quad
    \ $\underbrace{\overbrace{\ninlinegraphic{zx-gens-redspider-alpha}}^n}_m::
    \left\{
    \begin{array}{ccr}
      \overbrace{\ket{+\ldots +}}^n 
      &\!\! \mapsto\!\! 
      &\ \ \ \ \, \overbrace{\ket{+\ldots +}}^m\\
      \ket{-\ldots -} &\!\! \mapsto\!\! & e^{i\alpha}\ket{-\ldots -}\\
      \mbox{others}&\!\! \mapsto\!\! & 0\hspace{1.1cm}
    \end{array}
  \right.$
        \\[1.6em]

$\ninlinegraphic{zx-green-alpha}= Z^1_1(\alpha) =\left(
  \begin{array}{cc}
    1 & 0 \\ 0 & e^{i \alpha}
  \end{array}\right)$ \qquad \quad
  $\ninlinegraphic{zx-red-alpha}= X^1_1(\alpha) =
  e^{-i{\alpha/2}}
  \left(
    \begin{array}{cc}
      \cos\frac{\alpha}{2} & i\sin\frac{\alpha}{2}\\
      i\sin\frac{\alpha}{2} & \cos\frac{\alpha}{2}
    \end{array}
  \right)$\\[1.6em] 
\qquad\quad
      $\inlinegraphic{4.5mm}{black-diamond} = \sqrt{2}$\qquad \quad
$\ninlinegraphic{zx-green-alpha-point}=\ket{0} + e^{i\alpha}\ket{1}$ 
\qquad \quad$\ninlinegraphic{small-red-alpha-point}=\cos{\frac{\alpha}{2}}\ket{0} + i\sin{\frac{\alpha}{2}}\ket{1}$
  \end{center}


\begin{figure}[htb]
  \centering
  \framebox[\linewidth]{
  \begin{minipage}{0.98\linewidth}
    \begin{tabular}{cc}
      \it{``Only the topology matters''}
      & \eruleT \\[1em]
      \inlinegraphic{5.13em}{zx-rule-spidercomp-green} 
      \quad
       \inlinegraphic{5.13em}{zx-rule-spidercomp-red}$\!\!\!\!\!\!$
      & \eruleSi \\[3em]
      \inlinegraphic{2.7em}{zx-rule-straight} 
      \qquad
      \inlinegraphic{3.50em}{zx-rule-bend}
      & \eruleSii      \\[2em]
      \inlinegraphic{8.5mm}{zx-rule-copying-green}
      \quad
      \inlinegraphic{8.5mm}{zx-rule-copying-red}
      \quad
       \eruleBi 
      \quad\ \
      \inlinegraphic{4.0em}{zx-rule-bialgebra-1}\quad\  
      & \eruleBii \\[2em]
     $\!$ \inlinegraphic{11.5mm}{zx-rule-green-pi-homo}
      \  \, $\!\!$
      \inlinegraphic{11.5mm}{zx-rule-red-pi-homo}
      \  $\!$
       \eruleKi
       \quad\ \ \
      \inlinegraphic{10.2mm}{zx-rule-green-pi-red-alpha}     
      \ \ 
      \inlinegraphic{10.2mm}{zx-rule-red-pi-green-alpha}$\!\!$
      & \eruleKii \\[2em]
        $\inlinegraphic{8mm}{zx-gens-redspider-alpha} = \ 
        \inlinegraphic{12.6mm}{zx-rule-green-spider-H}$
      & \eruleC \\[2em]
  \  \qquad  \qquad  $\inlinegraphic{8mm}{zx-example-innp-mub-5} =\ 
  \inlinegraphic{3mm}{black-diamond}$
      \qquad \qquad\quad \eruleDi \qquad \quad
      \inlinegraphic{2.0em}{squareroot}
      & \eruleDii \\[1em]
    \end{tabular}
  \end{minipage}}

\caption{Basic Rules for the \zxcalculus}  \label{basicrules}
\end{figure}
\subsection{Quantomatic}
From now on all the pictures (except Fig. \ref{basicrules}) are made with \verb$quantomatic$ \cite{quantomatic}, a software tool for reasoning with the \zxcalculus, developed jointly at Oxford, Cambridge, Google and Edinburgh. Different rule sets can be loaded into \verb$quantomatic$. One can then input a graph and see what rewrites are possible within the loaded rules. Download and installation instructions can be found at  \url{http://sites.google.com/site/quantomatic/home}. Using \verb$quantomatic$ to produce the pictures confirms that all the rewrites are valid, but \verb$quantomatis$ does not (yet) provide a way to determine the order in which these rewrites should be applied; \verb$quantomatic$'s normalising tool often halts or gets into an infinite loop.

In \verb$quantomatic$ red and green dots are displayed as red and green dots, with their phase in a box of corresponding colour underneath the vertex. Hadamard gates are displayed as yellow boxes with an $H$ in them. Finally, inputs and outputs are displayed as grey boxes with a number in it, called boundary vertices. \verb$Quantomatic$ does not distinguish between inputs and outputs, so they look the same and are numbered with the same counter. E.g. a diagram with one input and one output would have one boundary vertex with the number zero and one with the number one in it, as in the diagrammatic representation of the Hopf law in the next section. Which output or input gets which number is not important and depends solely on the order of input in \verb$quantomatic$. 

\subsection{Basic Rules}
Derivations in the Red/Green-calculus are done mostly by a few simple rules, outlined in Fig.\  \ref{basicrules}. Note that rule \ruleT \text{ }does not mean that the topology is always preserved; other rules might change this completely. Informally \ruleT \text{ }can be seen as ``yanking'' the wires, making sure the number of inputs and outputs is preserved and the way they are connected.

An addition to these rules has been made by Kissinger in \cite{Aleks:2009} and Coecke and Edwards in \cite{entanglement}. These two related rules that are called $\ket{0}$- and $\ket{1}$- supplementarity were found by solving a matrix equation in \cite{Aleks:2009} and by means of the underlying  algebraic structure in \cite{entanglement}.

\begin{center}
 $\ninlinegraphic[0.5]{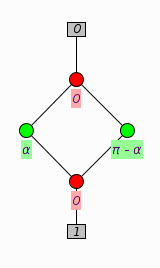}=\ninlinegraphic[0.5]{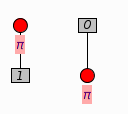} \,, \ \ 
  \ninlinegraphic[0.5]{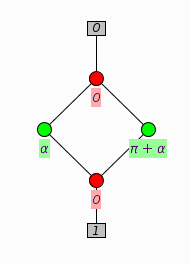}=\ninlinegraphic[0.5]{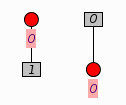}\qquad \eruleE\,.$ 
\end{center}

From now on scalars will be left out for sake of simplicity. Note also that \verb$quantomatic$ does not include scalars in the rewrites. 

A useful derivation from the basic rules is the Hopf law:

\begin{lemma}[Hopf law]
  \[
  \ninlinegraphic[0.5]{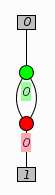} = \
  \ninlinegraphic[0.5]{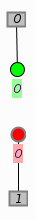}    \quad \ \ \ ({\bf
    B}')\]

\end{lemma}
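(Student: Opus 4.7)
The plan is to derive \eruleBp from the basic rules in Figure~\ref{basicrules}, working modulo scalars in line with the convention adopted after rule \eruleE. The LHS is a green spider joined to a red spider by two parallel wires with one external leg on each side, and the RHS is the disconnected composition of a green ``counit'' (green spider with one input and no outputs) sitting above a red ``unit'' (red spider with no inputs and one output).

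First I would apply the spider fusion rule \eruleSi in reverse to split both the green and the red central spider into pairs of auxiliary spiders, turning the two parallel internal wires into a complete bipartite $K_{2,2}$ subdiagram between two green and two red spiders. This is a complexity-increasing rewrite, which looks counterintuitive, but it is essential: it sets up exactly the precondition needed on the right-hand side of the bialgebra rule \eruleBii, so we are entitled to read that rule from right to left.

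Second, I would apply \eruleBii to collapse the $K_{2,2}$ fragment into a single green--red pair joined by one wire, now positioned so that the outer green spider is separated from the outer red spider. After this step, one of the spiders in the middle carries only a self-loop, which by \eruleSii (bending the cup/cap) and \eruleSi becomes a disconnected scalar component; the copy rule \eruleBi then propagates the remaining zero-phase state through the opposite-colour spider so that both central spiders collapse. What is left is precisely the green effect on one side and the red state on the other side, in agreement with the RHS up to the discarded scalars.

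The main obstacle is choreographing step~one so that the split matches the LHS of \eruleBii exactly: one must ensure that each of the two original parallel wires is routed to a \emph{distinct} pair of auxiliary spiders, which is a bookkeeping issue easily handled by rule \eruleT (``only topology matters''). No phase manipulation is needed because both central spiders carry phase~$0$, so rule \eruleK is not invoked; the scalars generated along the way are absorbed via \eruleDi and \eruleDii and then dropped, consistent with the \verb$quantomatic$ convention mentioned above.
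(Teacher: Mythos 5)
Your toolkit (spider splitting, the bialgebra rule \eruleBii, the copy rule \eruleBi) is the right one, but the pivotal first step does not work, and it forces you to apply the bialgebra rule in the wrong direction. Unfusing a spider with \eruleSi read backwards can only introduce new wires \emph{between spiders of the same colour}: the legs of the original spider are redistributed among the fragments, and every freshly created wire is green--green or red--red. Hence the number of green--red wires is invariant under \eruleT, \eruleSi and \eruleSii. The left-hand side of \eruleBp has exactly two green--red wires, whereas a complete bipartite $K_{2,2}$ between two green and two red spiders has four, so the configuration you want to feed into \eruleBii read right-to-left is unreachable from the starting diagram by topology and spider moves alone. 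Worse, if your first two steps were admissible, then after re-fusing the outer spiders with the collapsed pair you would conclude that the two-wire ``bubble'' equals a bare wire (two phase-$0$, one-in-one-out spiders joined by a single wire), i.e.\ the identity --- contradicting the very statement being proved, and semantically false, since the bubble denotes a rank-one map ($\ket{0}\bra{+}$ up to colour conventions and scalars).

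The paper's derivation necessarily goes the other way around. Its rule sequence is \eruleT, \eruleS, \eruleBii, \eruleBi, \eruleSi, \eruleSii; since \eruleT and \eruleS preserve the green--red wire count at two, the diagram to which \eruleBii is applied can only match the \emph{single-wire} (simple) side of the rule, so the bialgebra step is taken in the complexity-\emph{increasing} direction. Concretely, \eruleT and \eruleS first reshape the bubble so that the green and red spiders are joined by one direct wire, the second connection being routed around via the compact (cup/cap) structure; \eruleBii then introduces the four-spider pattern, and the copy rule \eruleBi together with spider fusion \eruleSi and yanking \eruleSii disconnects the diagram into the green effect and the red state. The ingredient your proposal is missing is precisely this use of wire-bending to trade the double connection for a single one \emph{before} invoking \eruleBii; the complexity increase belongs to the bialgebra step, not to a preparatory spider-splitting step.
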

\begin{proof}
\begin{align}
\nonumber
  \ninlinegraphic[0.5]{hopf-proof-informal-1}  &\stackrel{\eruleT}{=} 
  \ninlinegraphic[0.5]{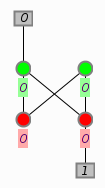}  \stackrel{\eruleS}{=} 
  \ninlinegraphic[0.5]{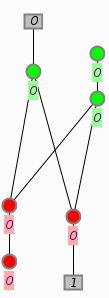}  \stackrel{\eruleBii}{=} 
  \ninlinegraphic[0.5]{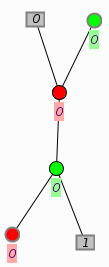}   \stackrel{\eruleBi}{=} 
  \ninlinegraphic[0.5]{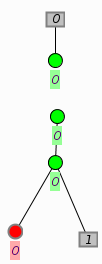}  \\&\stackrel{\eruleSi}{=} 
  \ninlinegraphic[0.5]{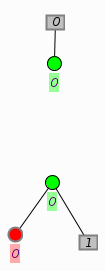}  \stackrel{\eruleSii}{=} 
  \ninlinegraphic[0.5]{hopf-proof-informal-7}
\end{align}
\end{proof}

Another useful derivation, used later on is the following:

\begin{lemma}\label{ZX-derive-A}
 
  \[
  \ninlinegraphic[0.5]{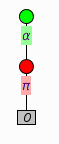} = \
  \ninlinegraphic[0.5]{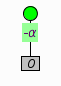}    \quad \ \ \ \eruleA \]

\end{lemma}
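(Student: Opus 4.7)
The plan is to mirror the proof of the Hopf law: perform a chain of local rewrites on the LHS diagram \texttt{Arule1}, each justified by one of the basic rules \eruleT, \eruleSi, \eruleSii, \eruleBi, \eruleBii, \eruleKi, \eruleKii, \eruleC, or the new \eruleE, until arriving at \texttt{Arule7}. Given that the authors have labelled the intermediate pictures \texttt{Arule1} through \texttt{Arule7}, I expect six rewrite steps in total, and I would present them in a single \texttt{align} block exactly as was done for \eruleBp.

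The first move I would attempt is a topological straightening / spider-fusion on the LHS: use \eruleT to re-arrange wires so that adjacent same-coloured dots become fusable, then collapse them with \eruleSi. This typically exposes a green-red-green (or red-green-red) sandwich that is ripe for a bialgebra rewrite \eruleBii, followed by copying/un-copying via \eruleBi to redistribute the dots across the diagram. Such a \texttt{T--S--B2--B1} opening matches the first four steps of the Hopf-law proof and is the standard template for deriving auxiliary identities in this calculus. After the bialgebra steps one usually ends up with a configuration of spiders joined by a single wire, which can be re-fused by \eruleSi and cleaned up by a final \eruleSii (yanking a bent wire straight), giving the RHS.

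A plausible variation — and the one I would try second if the direct approach stalls — is to invoke the newly introduced \eruleE (supplementarity) somewhere in the middle of the derivation. This is suggested by the fact that \eruleE has just been presented and derived lemmas are typically assembled from the most recently introduced rules. In that case the rewrite chain would be: fuse/expand spiders by \eruleS to create two dots with supplementary phases, apply \eruleE to collapse them, then clean up using \eruleKi/\eruleKii to push any residual $\pi$-phases through to their final position.

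The main obstacle will be choosing the correct order of rewrites, since the basic rules are highly non-confluent and many sequences lead back to the starting diagram (which is precisely why, as noted earlier, \texttt{quantomatic}'s normaliser ``often halts or gets into an infinite loop''). To avoid that, I would work backwards from \texttt{Arule7} in parallel with working forwards from \texttt{Arule1}, trying to meet in the middle at a symmetric intermediate diagram — the bialgebra-type rewrites \eruleBi and \eruleBii are self-dual enough that such a meet-in-the-middle strategy usually pins down the right sequence within a handful of attempts.
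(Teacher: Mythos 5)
Your proposal never actually commits to a derivation — it offers two candidate strategies and a search heuristic — and neither strategy matches what is actually needed here. The paper's proof is the five-step chain
\eruleS, \eruleKii, \eruleKi, \eruleS, \eruleS: fuse/split spiders to isolate a $\pi$-phase dot next to a phase dot of the other colour, commute the $\pi$ through it with \eruleKii (which negates the phase), absorb the resulting $\pi$ dots using the homomorphism rule \eruleKi, and then refuse the spiders. No bialgebra rule appears anywhere, so your primary plan of replaying the Hopf-law template (\eruleT, \eruleSi, \eruleBii, \eruleBi, \ldots) is the wrong opening: \eruleA is a phase-commutation identity, not a Hopf-algebra identity, and the bialgebra rules give you nothing to work with when the diagram has no green--red edge pattern of the right shape. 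Your fallback correctly names \eruleKi and \eruleKii as the relevant tools, but wraps them around an application of \eruleE that the derivation does not use and does not need — \eruleE is introduced in the paper because it is required for the $W$-state computation in Eq.~\ref{lemma:plug1W}, not for this lemma.

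The more basic problem is that a description of how you would search for a rewrite sequence is not a proof of one. In a rewriting calculus the entire content of such a lemma is the explicit sequence of rule applications with the intermediate diagrams; "meet in the middle and it usually works within a handful of attempts" leaves the statement unverified. To close the gap you would need to exhibit the actual intermediate diagrams and check each local rewrite against the rule set in Fig.~\ref{basicrules} — which, for this lemma, means recognising that the only rules that move a $\pi$ past a phase of the other colour are \eruleKi and \eruleKii, and building the derivation around them from the start.
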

\begin{proof}
\begin{equation}
  \ninlinegraphic[0.5]{Arule1}  \stackrel{\eruleS}{=} 
  \ninlinegraphic[0.5]{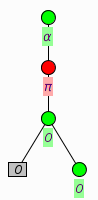}  \stackrel{\eruleKii}{=} 
  \ninlinegraphic[0.5]{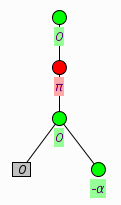}  \stackrel{\eruleKi}{=} 
  \ninlinegraphic[0.5]{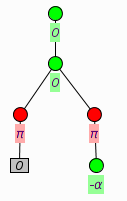}   \stackrel{\eruleS}{=} 
  \ninlinegraphic[0.5]{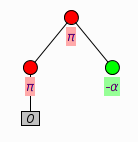}  \stackrel{\eruleS}{=} 
  \ninlinegraphic[0.5]{Arule7}.
\end{equation}
\end{proof}

\subsection{Measurements into the $x$- and $z$-basis}

The graphical representation of measurements or ``effects'' into the $x$- and $z$-basis can be derived by the Hilbert space interpretation of points. $\ket{+}$ ($\ket{x^+}$) and $\ket{-}$ ($\ket{x^-}$) and $\ket{0}$ ($\ket{z^+}$) and $\ket{1}$($\ket{z^-}$) are represented as
\begin{equation}
  \ninlinegraphic[0.5]{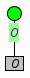}=\ket{+} \, \ \
  \ninlinegraphic[0.5]{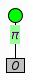}=\ket{-} \,\ \ 
  \ninlinegraphic[0.5]{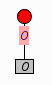}=\\\ket{0} \, \ \
  \ninlinegraphic[0.5]{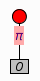}=\ket{1} \,.
    \end{equation}

\subsection{The GHZ state}
The GHZ state is one of the only two SLOCC-inequivalent classes of tripartite entanglement \cite{inequiventanglement}. SLOCC-inequivalent means inequivalent under Stochastic Local Operations and Classical Communication, i.e.\ one cannot be turned into the other by means of stochastic local operations (unitaries and or measurements) and classical communication. GHZ is defined as $\ket{GHZ}= \frac{1}{\sqrt{2}}(\ket{000}+ \ket{111})$, or as the map
\[ GHZ :: \begin{cases}
      \ket{0} \mapsto \ket{00}\\
      \ket{1} \mapsto \ket{11}.
    \end{cases}
\]
Graphically it is represented as \cite{intro3}
\begin{equation}
 \label{eq:GHZstate}
\ninlinegraphic[0.5]{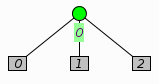}.
\end{equation}
Plugging $\ket{0}$ and $\ket{1}$ gives
\begin{equation}
  \ninlinegraphic[0.5]{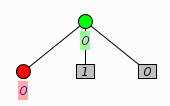} \stackrel{\eruleBi}{=}
    \ninlinegraphic[0.5]{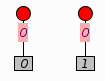}= \ket{00}\end{equation} {and}
\begin{equation}
    \ninlinegraphic[0.5]{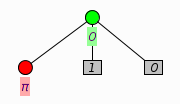} \stackrel{\eruleBi/ \eruleKi}{=}
    \ninlinegraphic[0.5]{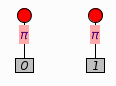}= \ket{11},
\end{equation}
as required.

\subsection{The $W$-state}
The class of $W$-states is SLOCC-inequivalent to the class of GHZ states \cite{inequiventanglement}. The $W$-state is defined as
\[\ket{W}= \frac{1}{\sqrt{3}}(\ket{001}+ \ket{010} + \ket{100}),
\]
or as the map \cite{Aleks:2009}
\[ W :: \begin{cases}
      \ket{0} \mapsto \ket{01} + \ket{10}\\
      \ket{1} \mapsto \ket{00}.
    \end{cases}
\]
Considering this map, the $W$-state can be graphically represented as \cite{Aleks:2009}
\begin{equation}
 \label{eq:Wstate}
\ninlinegraphic[0.5]{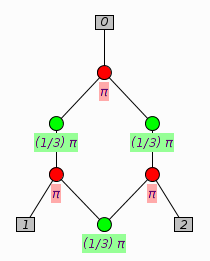}.
\end{equation}
This graphical representation shows the robustness of the $W$-state; due to the pairwise entanglement, after tracing out one of the qubit, there is still the possibility of bipartite entanglement, as opposed to the GHZ state, which is fully separable when any of the qubits is traced out. Plugging $\ket{0}$ in Eq.\  \ref{eq:Wstate} gives
\begin{align}
\nonumber
  \ninlinegraphic[0.5]{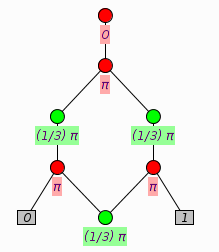} &\stackrel{\eruleSi}{=}
    \ninlinegraphic[0.5]{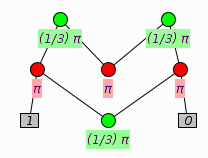}   \quad\stackrel{\eruleKii, \eruleS}{=}
    \ninlinegraphic[0.5]{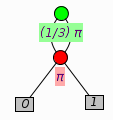} \end{align}\begin{align} &\stackrel{\eruleBp}{=}
    \label{eq:w30} \ninlinegraphic[0.5]{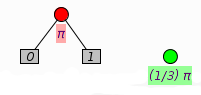}= \ket{01} + \ket{10},
\end{align}
as expected. Plugging $\ket{1}$ in Eq.\  \ref{eq:Wstate} gives
\begin{align}
\nonumber
  \ninlinegraphic[0.5]{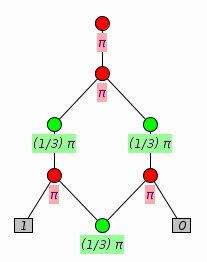} &\stackrel{\eruleSi}{=}
    \ninlinegraphic[0.5]{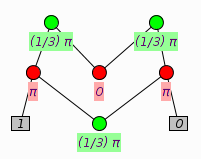} \stackrel{\eruleS}{=}
    \ninlinegraphic[0.5]{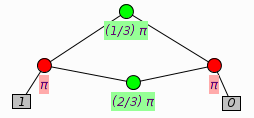}  \\&\stackrel{\eruleE}{=}\label{lemma:plug1W}
    \ninlinegraphic[0.5]{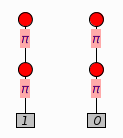}\stackrel{\eruleSi}{=} \ninlinegraphic[0.5]{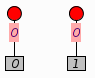}= \ket{00},
\end{align}
as required.

\subsection{The Pauli Matrices}
By the Hilbert space interpretation, one can obtain the representation of the Pauli-Z and Pauli-X matrices in \zxcalculus:
\begin{center}
$\sigma_z=
\ninlinegraphic[0.5]{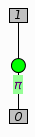} = \left(
   \begin{array}{rr}
     1&0\\ 0&-1
   \end{array}
\right)
\qquad \sigma_x=
\ninlinegraphic[0.5]{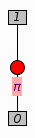} = \left(
   \begin{array}{rr}
     0&1\\ \ 1&\ 0
   \end{array}
\right)\;.$
\end{center}

Once the Pauli-$X$ and $Z$ matrices are known, it is easily deduced that $i \sigma_y = \sigma_z \times \sigma_x= \sigma_z \circ \sigma_x$ and $- i \sigma_y= \sigma_x \times \sigma_z = \sigma_x \circ \sigma_z$ are compositions of $\sigma_x$ and $\sigma_z$. 

\begin{center}
$ i \sigma_y = \ninlinegraphic[0.5]{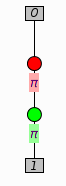} = \left(
   \begin{array}{rr}
     0&1\\ -1&0
   \end{array}
\right)
\qquad
-i \sigma_y = \ninlinegraphic[0.5]{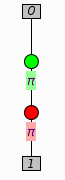} = \left(
   \begin{array}{rr}
     0&-1\\ \ 1&\ 0
   \end{array}
\right)\;.$
\end{center}

\subsection{Controlled Not gate}
Another useful tool is the Controlled Not gate. The Controlled Not gate is defined as follows: 
\begin{equation}\label{eq:infCNOT}
\textbf{CNOT}=
\ninlinegraphic[0.5]{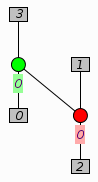} = \ninlinegraphic[0.5]{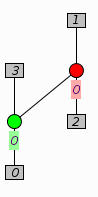} =  
\left(\begin{array}{cccc}
1 & 0 & 0 & 0\\
0 & 1 & 0 & 0\\
0 & 0 & 0 & 1\\
0 & 0 & 1 & 0
\end{array}\right).
\end{equation}


\section{\label{Sec:GHZDC} Superdense Coding with GHZ} 
As with Bell states, it is possible to transfer an amount of classical bits by transferring fewer qubits by means of different states in the GHZ class. When states in the GHZ class are used for super dense coding, two qubits need to be transfered in order to transfer three classical bits \cite{superdenseGHZ}. This section is divided up in five subsections. First the steps of the protocol will be explained. Then the eight different states in the GHZ class will be presented. In the last three subsections it will be shown how, through measurement in the GHZ basis, these eight different states can be translated into three classical bits, proving the validity of the protocol. 

\subsection{Super Dense Coding with GHZ protocol}
Provided Alice and Bob share $\ket{GHZ}$, such that the first qubit belongs to Bob and the other two qubits belong to Alice, the following protocol describes superdense coding with GHZ as in \cite{superdenseGHZ, multidenseteleport}:
\begin{enumerate}
 \item Alice applies a combination of $I, \sigma_x, i\sigma_y$ and $\sigma_z$ on both her qubits, encoding one of eight distinguishable states in the GHZ class. 
 \item Alice transfers both her qubits to Bob.
 \item Bob measures all three qubits in the GHZ basis, retrieving the state Alice encoded. 
 \item Bob translates the retrieved state to three classical bits. 
\end{enumerate}

\subsection{Different GHZ states}
There are eight different states in the GHZ class. One can go from one to the other by performing unitary single particle operations on two of the three particles. These unitary operations are $I, \sigma_x, i\sigma_y$ and $\sigma_z$. Though there are 16 different combinations of these operators on two qubits, only half of them generate distinguishable states \cite{Wang2005} as can be seen by comparing the graphical representations in Table \ref{T:dc} and \ref{T:dc2} in Appendix \ref{ap:GHZ}. In this section we will work with states in the standard form 
\begin{equation}
 \ket{GHZ_{+ij}}= \frac{1}{\sqrt{2}} (\ket{0ij} + \ket{1 \overline{i} \overline{j}}),
\end{equation}
as in \cite{MultiQSSSEPR, superdenseGHZ}, where $i,j \in \{0,1\}$, $\overline{i}=1-i$ and $\overline{j}=1-j$.

\subsection{Encoding a GHZ measurement outcome into classical bits}
The encoding is as in \cite{measurement}. After measurement, an output qubit is encoded as 0 if it is $\ket{0}$ and as 1 if it is$\ket{1}$. Every GHZ state gives three output bits.  If the first output bit is 0, then there is an odd number of $\ket{+}$ in the basis, otherwise an even number. If the second output bit is 0, then the first two bits in the GHZ class state are the same, otherwise they are different. And finally, if the last output bit is 0, the last two bits in the GHZ class state are the same. This encoding is displayed in Table \ref{T:dc} in Appendix \ref{ap:GHZ}.

\subsection{Measurement into the GHZ basis}
The circuit to measure into the GHZ basis is \cite{measurement}
\begin{equation}\label{eq:GHZmeasurement}
  \ninlinegraphic[0.5]{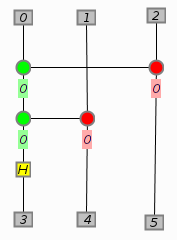},
\end{equation}
which gives three qubits in the $z$-basis. Plugging states in the $z$-basis, we obtain for $\alpha, \beta, \gamma \in \{0, \pi\}$
\begin{equation}
\label{eq:GHZmeasurementoutcome}
 \ninlinegraphic[0.5]{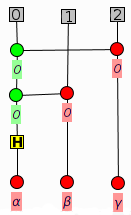}\stackrel{\eruleC}{=}\ninlinegraphic[0.5]{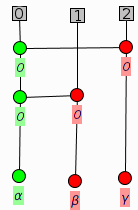}\stackrel{\eruleS}{=}\ninlinegraphic[0.5]{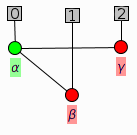},
\end{equation}
which results in one of eight states in the GHZ class upside down by Table \ref{T:dc} and \ref{T:dc2} in Appendix \ref{ap:GHZ} if values for $\alpha$, $\beta$ and $\gamma$ are set.

\subsection{Validity of the protocol}
\begin{lemma}\label{lemma:SDCGHZ}
 The encoding in combination with the GHZ measurement circuit in \cite{measurement} makes for a valid Superdense Coding protocol \footnotemark.
\footnotetext{Note that this encoding is different from \cite{superdenseGHZ, multidenseteleport}.}

\end{lemma}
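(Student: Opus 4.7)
The proof strategy is an exhaustive case analysis over the eight distinguishable Pauli encodings Alice can apply to her two qubits, each yielding a distinct GHZ-class state $\ket{GHZ_{\sigma ij}}$ with $\sigma \in \{+,-\}$ and $i,j \in \{0,1\}$. For each case I would verify that the measurement circuit of Eq.~\ref{eq:GHZmeasurement}, applied to that state, produces a three-qubit computational-basis outcome whose decoding---by the rule of the previous subsection---returns the three classical bits associated to $(\sigma, i, j)$ in Table~\ref{T:dc}.

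Concretely, in each of the eight cases I would compose the ZX-diagram of Alice's encoded state with the measurement diagram of Eq.~\ref{eq:GHZmeasurement} and reduce using the basic rewrite rules of Fig.~\ref{basicrules}. Equation~\ref{eq:GHZmeasurementoutcome} already shows that, after plugging $z$-basis effects through the circuit, the composite diagram collapses to three one-qubit spiders whose phases $\alpha,\beta,\gamma \in \{0,\pi\}$ determine the output. The work in each case therefore reduces to propagating Alice's $\sigma_x$, $\sigma_z$ and $i\sigma_y$ operators through the GHZ input leg and collecting the induced $\pi$-phases at the three output wires; the colour-change rule \eruleKii, the $\pi$-copy rule \eruleKi, the bialgebra rules \eruleBi\ and \eruleBp, and Lemma~\ref{ZX-derive-A} are the main tools for simplifying compositions involving $i\sigma_y = \sigma_z \circ \sigma_x$ and Pauli propagation along the GHZ tripod.

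The main obstacle is less conceptual than it is bookkeeping. Since only eight of the sixteen Pauli-pair combinations produce distinguishable GHZ-class states, the correspondence between Alice's operations and the labels $(\sigma, i, j)$ must be tracked carefully, and each of the eight diagrammatic reductions must be compared against the decoding rule to confirm that the final triple of classical bits matches. The technical danger is in the $i\sigma_y$ cases, where both a colour flip and a $\pi$-phase are introduced and one must make sure phases are not miscounted across the three legs of the tripod; once these eight cases have been carried out and tabulated, validity of the protocol follows directly by comparison with the encoding column of Table~\ref{T:dc}.
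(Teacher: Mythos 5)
Your proposal matches the paper's proof in both strategy and substance: the paper verifies the lemma by exactly the exhaustive eight-case diagrammatic computation you describe, plugging each GHZ-class state from Table~\ref{T:dc} into the measurement circuit of Eq.~\ref{eq:GHZmeasurement} and reducing (chiefly via \eruleS, \eruleBp\ and \eruleC) to a three-qubit $z$-basis outcome that is then read off as the corresponding binary string. The only cosmetic difference is that the paper works directly from the already-encoded standard-form states rather than explicitly propagating Alice's Pauli operators through the tripod, but this is the same case analysis.
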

\begin{proof}
Let $\ket{0}=0$ and $\ket{1}=1$ in classical bits. What needs to be shown is that for all eight states in the GHZ class, their measurement outcome is equal to their binary representation in Table \ref{T:dc} in Appendix \ref{ap:GHZ}.
\begin{align}
\nonumber
 \ninlinegraphic[0.5]{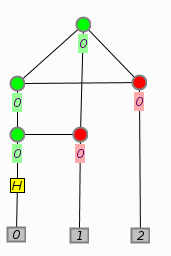} &\stackrel{\eruleS}{=}\ninlinegraphic[0.5]{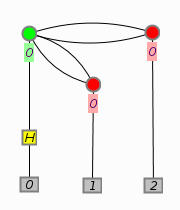} \stackrel{\eruleBp}{=}\ninlinegraphic[0.5]{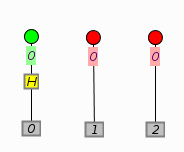} \\\label{eq:SDC1}&\stackrel{\eruleC}{=}\ninlinegraphic[0.5]{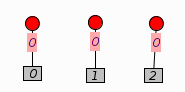} = \ket{000} = 000=0
\end{align}
\begin{align}
\nonumber
 \ninlinegraphic[0.5]{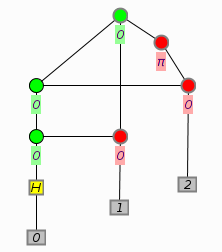} &\stackrel{\eruleS}{=}\ninlinegraphic[0.5]{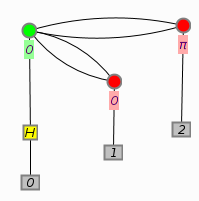} \stackrel{\eruleBp / \eruleC}{=}\ninlinegraphic[0.5]{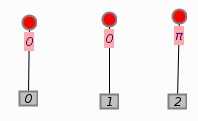} \\&= \ket{001} = 001=1\end{align}\begin{align}\nonumber
 \ninlinegraphic[0.5]{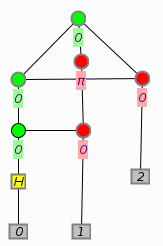} &\stackrel{\eruleS}{=}\ninlinegraphic[0.5]{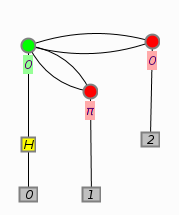} \stackrel{\eruleBp / \eruleC}{=}\ninlinegraphic[0.5]{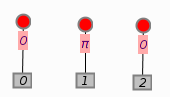} \\& = \ket{010} = 010=2\end{align}\begin{align}
\nonumber
 \ninlinegraphic[0.5]{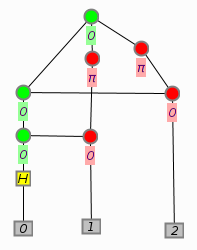} &\stackrel{\eruleS}{=}\ninlinegraphic[0.5]{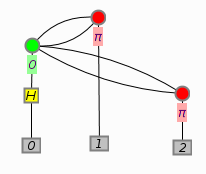} \stackrel{\eruleBp / \eruleC}{=}\ninlinegraphic[0.5]{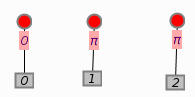} \\&= \ket{011} = 011=3
\end{align}
\begin{align}
\nonumber
\ninlinegraphic[0.5]{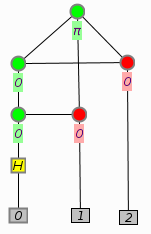} &\stackrel{\eruleS}{=}\ninlinegraphic[0.5]{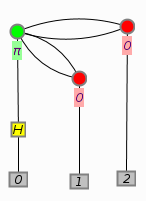} \stackrel{\eruleBp / \eruleC}{=}\ninlinegraphic[0.5]{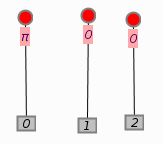} \\
& = \ket{100} = 100=4\end{align}\begin{align}
\nonumber
 \ninlinegraphic[0.5]{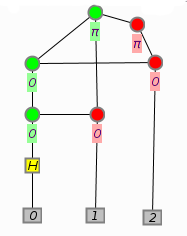} &\stackrel{\eruleS}{=}\ninlinegraphic[0.5]{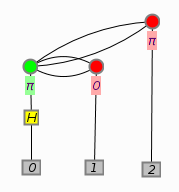} \stackrel{\eruleBp / \eruleC}{=}\ninlinegraphic[0.5]{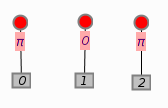} \\
 & = \ket{101} = 101=5\end{align}\begin{align}\nonumber
 \ninlinegraphic[0.5]{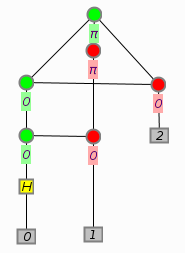} &\stackrel{\eruleS}{=}\ninlinegraphic[0.5]{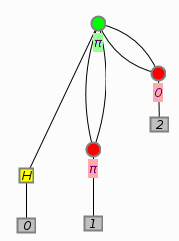} \stackrel{\eruleBp / \eruleC}{=}\ninlinegraphic[0.5]{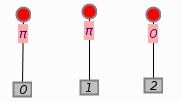} \\& = \ket{110} = 110=6
\end{align}
\begin{align}
\nonumber
  \ninlinegraphic[0.5]{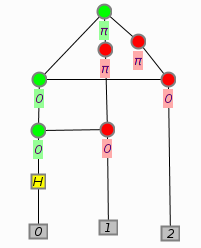} &\stackrel{\eruleS}{=}\ninlinegraphic[0.5]{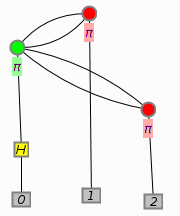} \stackrel{\eruleBp / \eruleC}{=}\ninlinegraphic[0.5]{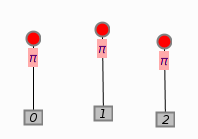} \\
\label{eq:SDC2}
& = \ket{111} = 111=7.
\end{align}
Eq.\  \ref{eq:SDC1}-\ref{eq:SDC2} imply the validity of the protocol. 
\end{proof}

\subsection{Superdense coding with $N$-GHZ}\label{sec:Nsuperdense}
In a similar way superdense coding for $N$-GHZ states can be constructed. One of $\{I,\sigma_x, i \sigma_y,  \sigma_z\}$ can be applied on the $N^{\text{th}}$ qubit and one of $\{I, \sigma_x\}$ on qubit $2 - (N-1)$ to encode $2^N$ different states. They can be distinguished with a  measurement like the GHZ basis measurement \cite{measurement}.

\section{Pairwise Quantum Key Distribution with $W_3$}\label{comw}

In this section the Quantum Key Distribution protocol with $W_3$ from \cite{comW} will be explained. It will moreover be shown by means of the \zxcalculus that this protocol works.
\subsection{Pairwise Quantum Key Distribution with $W_3$ protocol}
Alice, Bob and Charlie share a series of $W_3$-states in the Pairwise Quantum Key Distribution with $W_3$ protocol and perform measurements on their qubits in such way that two of them will share a common (classical) key. Assuming they share a series of  $W_3$-states, the protocol can be established as follows:
\begin{enumerate}
 \item All choose at random the $x$- or the $z$-basis to measure their qubit in.
\item Each announces publicly his or her measurement direction.
\item \label{it:security} For security reasons, they randomly choose to announce their measurement outcomes, to check for eavesdropping. If they do, the protocol is restarted.
\item \label{item:basis}If the overall measurement basis is $z-x-x$, $x-z-x$ or $x-x-z$, they continue with the protocol. Otherwise they start over and discard these measurement outcomes.
\item The one who measured along the $z$-axis is the decider. S/he tells the others whether the outcome is $\bra{z^+}$. Otherwise they restart the protocol.
\item \label{item:outcome}The other two now know that they have the same outcome, i.e. they share a bit now. 
\item Repeat the protocol until the desired amount of key bits are obtained.
\item Use the information from step \ref{it:security} to check for eavesdropping. If eavesdropping is detected, discard the obtained key bits and start a new quantum channel to repeat the protocol.
\end{enumerate}
\begin{figure}[ht]
   \centering
    \includegraphics[width=0.5 \textwidth]{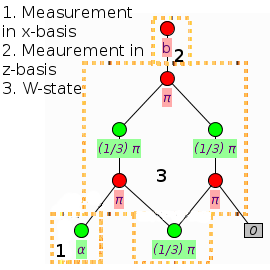}
\caption{\label{fig:Wcom} Graphical representation of the set of instructions of the Pairwise Quantum Key Distribution with $W_3$ protocol. $\alpha, b \in {0, \pi}$}
\end{figure}
\begin{lemma}
 Fig.\  \ref{fig:Wcom} is the graphical representation of the set of the instructions of the Pairwise Quantum Key Distribution with $W_3$ protocol, when the first two measurements are in the $z$- and $x$-basis.
\end{lemma}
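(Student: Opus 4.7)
The plan is to construct the diagram of Fig.\ \ref{fig:Wcom} directly from the protocol steps, reading each instruction as a diagrammatic primitive and composing it with the shared $W_3$-state. First, I recall that the shared state has the graphical presentation of Eq.\ (\ref{eq:Wstate}). Since we are in the case where the first qubit is measured in the $z$-basis and the second in the $x$-basis, I attach to the first wire the $z$-basis effect corresponding to outcome $\alpha \in \{0,\pi\}$, which by the measurement-effects subsection is a red point carrying phase $\alpha$, and to the second wire the $x$-basis effect with outcome $b \in \{0,\pi\}$, which is a green point with phase $b$. No effect is plugged into the third wire, which remains open as the third party's as-yet-unmeasured qubit.

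Next, I would simplify the resulting composite by the basic rules, much as was done in Eqs.\ (\ref{eq:w30}) and (\ref{lemma:plug1W}) for the plain $\ket{0}$ and $\ket{1}$ effects plugged into the $W$-state, but now carrying symbolic phases $\alpha$ and $b$ instead of the fixed values $0,\pi$. The key moves I expect to use are \eruleSi to merge the inserted red point with the red node of the $W$-state, followed by \eruleKii (or \eruleC) to pull the green effect past the red node it meets, and possibly \eruleBp or \eruleE depending on the resulting local configuration. Up to the scalar factors that the paper leaves implicit, the simplified diagram should coincide exactly with Fig.\ \ref{fig:Wcom}, at which point the lemma is established.

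The main obstacle will be the bookkeeping: the $W$-state as drawn in Eq.\ (\ref{eq:Wstate}) is not symmetric in its three output wires (one leg is distinguished by the $W$ construction $W::\ket{0}\mapsto\ket{01}+\ket{10},\ \ket{1}\mapsto\ket{00}$), so the order in which the red and green effects are attached matters, and the rewrite sequence depends on which leg carries which colour. Once the leg-assignment is fixed and the colour of each effect matches the measurement-basis-to-colour dictionary (red-for-$z$, green-for-$x$), the remaining reductions are routine applications of rules already exercised in the $\ket{0}/\ket{1}$ plugging computations, and the match with Fig.\ \ref{fig:Wcom} follows directly.
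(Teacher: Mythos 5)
Your core move --- read off the three protocol ingredients (the shared $W_3$-state of Eq.\ \ref{eq:Wstate}, a $z$-basis effect on one leg, an $x$-basis effect on another, the third leg left open) and compose them --- is exactly what the paper's proof does; the paper simply identifies box 1 as the $x$-basis measurement, box 2 as the $z$-basis measurement, and box 3 as the $W_3$-state. Your colour dictionary (red point for a $z$-basis effect, green point for an $x$-basis effect) agrees with the paper's conventions, though note that in the paper's usage it is $b$ that labels the $z$-basis (red) outcome and $\alpha$ the $x$-basis (green) one --- this is visible from the later proofs, where setting $b=\pi$ invokes the $\ket{1}$-plugging computation of Eq.\ \ref{lemma:plug1W} and the decider's outcome $\bra{z^+}$ corresponds to $b=0$. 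You have the two labels swapped, which is harmless here but would cause confusion when reading Lemma \ref{lemma:Wcorrect}.

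The genuine misstep is your second paragraph. This lemma asserts only that Fig.\ \ref{fig:Wcom} \emph{is} the (unreduced) composite of those three components --- it is a statement about what the picture depicts, not about what it rewrites to. No applications of \eruleSi, \eruleKii, \eruleBp or \eruleE are needed, and carrying them out would produce a \emph{different} diagram from the one in the figure; the reduction you describe is precisely the content of the separate correctness lemma (Lemma \ref{lemma:Wcorrect}), where the composite is simplified to show the two $x$-basis outcomes agree. So drop the simplification: once the three boxes are identified with the three instructions, the proof is complete.
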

\begin{proof}
 Box 1 is a measurement in the $x$-basis, box 2 is a measurement into the $z$-basis. Box 3 is the $W_3$-state. 
\end{proof}
\begin{lemma}
 If any one of them gets the outcome $\bra{z^-}$ the entanglement will be broken and the outcomes for the other two will be $\bra{z^+}$.
\end{lemma}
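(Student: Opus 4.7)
The plan is to verify the claim in the \zxcalculus by plugging the red $\pi$-point (which, by the representation $\ninlinegraphic[0.5]{red-pi-short-point}$ given earlier, is the state $\ket{1}$, and dually serves as the effect $\bra{z^-}$) into each of the three output wires of the graphical $W$-state from Eq.~\ref{eq:Wstate}, and reducing the resulting diagram to two bare green points representing $\ket{0}\otimes\ket{0}$ (equivalently, to $\bra{z^+}\otimes\bra{z^+}$ when read as effects).

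First I would dispose of the case in which the $\bra{z^-}$ effect is attached to the wire that plays the distinguished role in the diagrammatic form of $W$. This is already done: the derivation culminating in Eq.~\ref{lemma:plug1W} rewrites $W$ with $\ket{1}$ plugged into that wire down to $\ket{00}$, using \eruleSi, \eruleS, \eruleE, and a final \eruleSi. So this case requires no new work.

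For the other two cases, the strategy is the same in spirit: absorb the red $\pi$-point into the adjacent red vertex using \eruleSi, then use the copy/commutation rules \eruleKii and \eruleKi to pass the resulting red $\pi$'s through the green spiders of the $W$-diagram, picking up $\pi$-phases or additional red points on the other incident wires, and finally apply \eruleE (the $\ket{0}/\ket{1}$-supplementarity rule) together with bialgebra rules \eruleBp / \eruleBii to collapse the internal structure, leaving only plain green points on the two remaining outputs. The target normal form on each of those wires is the green point with phase $0$, which is $\ket{0}$ by the interpretation of $\ninlinegraphic[0.5]{zx-green-alpha-point}$ at $\alpha=0$.

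The main obstacle is that the diagram in Eq.~\ref{eq:Wstate} is not manifestly permutation-symmetric, even though the state $\ket{W}$ is symmetric under permutation of the three qubits; consequently one cannot just cite the derivation of Eq.~\ref{lemma:plug1W} for the remaining two wire positions and instead has to repeat a somewhat different rewrite sequence for each of them. Once one is confident in the symmetry of $\ket{W}$, however, both remaining calculations must succeed, and the only real task is choosing a convenient order of rule applications (essentially \eruleSi, \eruleKii, \eruleE, \eruleBp, in varying order). The argument then concludes by noting that under the encoding of $z$-basis outcomes used in the protocol, two green phase-$0$ points on the surviving wires are precisely the $\bra{z^+}$ outcomes for the other two parties, as required.
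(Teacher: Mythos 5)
Your proposal matches the paper's proof, which is literally the one-liner ``Setting $b=\pi$, then by Eq.~\ref{lemma:plug1W}'' --- i.e.\ exactly the case you identify as already done, where the red $\pi$-point ($\ket{1}$, read as the effect $\bra{z^-}$) is plugged into the distinguished wire of the $W$-diagram and reduced to $\ket{00}$ via \eruleSi, \eruleS, \eruleE. Your further observation that the other two wire positions require separate (if analogous) rewrite sequences, because the diagram of Eq.~\ref{eq:Wstate} is not manifestly permutation-symmetric even though $\ket{W}$ is, is a fair point of additional rigor that the paper simply does not address.
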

\begin{proof}
 Setting $b=\pi$, then by Eq\ \ref{lemma:plug1W}.
\end{proof}
\begin{lemma}
\label{lemma:Wcorrect}
 When there is a proper overall measuring basis and the decider has the outcome $\bra{z^+}$, the other two always have the same result.
\end{lemma}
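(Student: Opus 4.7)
The plan is to reduce the claim to a straightforward ZX-calculus computation that mirrors the derivations already performed in Eq.\ \ref{eq:w30} and Lemma \ref{lemma:SDCGHZ}. By step \ref{item:basis} of the protocol, the only valid measurement patterns are $z\text{-}x\text{-}x$, $x\text{-}z\text{-}x$, and $x\text{-}x\text{-}z$. Since the graphical representation of $W_3$ in Eq.\ \ref{eq:Wstate} encodes a state that is symmetric under permutation of its three output wires (one can verify this by a short rewrite using \eruleS, \eruleSi, \eruleBp), it suffices to treat one case, say $z\text{-}x\text{-}x$, where the decider sits on the first wire.

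First I would plug a red (empty) dot, representing $\bra{z^+}$, into the decider's wire of the $W_3$-state. By exactly the same sequence of rewrites used in Eq.\ \ref{eq:w30} (\eruleSi, \eruleKii, \eruleS, \eruleBp), this reduces the diagram on the remaining two wires to the graphical form of $\ket{01} + \ket{10}$. Next I would attach a green dot with phase $\alpha$ to one wire and a green dot with phase $\beta$ to the other, where $\alpha,\beta \in \{0,\pi\}$ encode the $x$-basis outcomes ($0 \leftrightarrow \bra{x^+}$, $\pi \leftrightarrow \bra{x^-}$).

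The remaining step is to show that the resulting scalar diagram evaluates to zero precisely when $\alpha \neq \beta$. To do this, I would fuse the phases through \eruleSi onto the two copies of the $\ket{01}+\ket{10}$ diagram, push the $\pi$ phases across via \eruleKii where needed, and collapse the remaining closed structure using \eruleBp and the Hopf law \eruleBp. The diagram should simplify to the scalar $1 + e^{i(\alpha-\beta)}$ (up to a global factor already dropped by the convention that scalars are omitted). For $\alpha = \beta$ this is nonzero, while for $\alpha \neq \beta$ (both in $\{0,\pi\}$) it vanishes, meaning that disagreement between the two $x$-outcomes has zero amplitude, conditional on the decider observing $\bra{z^+}$.

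The main obstacle is bookkeeping rather than conceptual: the $W_3$ diagram is visually asymmetric in its three legs, so one must either invoke a preliminary symmetry rewrite to reduce the three bases $z\text{-}x\text{-}x$, $x\text{-}z\text{-}x$, $x\text{-}x\text{-}z$ to a single case, or else repeat the rewrite for each placement of the decider's red dot. Once that is done, the actual post-measurement reduction is short and re-uses rewrites that have already been carried out earlier in the paper.
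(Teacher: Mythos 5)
Your proposal is correct in substance but proves the lemma by a genuinely different route than the paper. You close all three wires and compute the joint amplitude of an outcome pair $(\alpha,\beta)$, showing it reduces to the scalar $1+e^{i(\alpha-\beta)}$, which vanishes exactly when the two $x$-outcomes disagree. The paper instead leaves the third wire open: it plugs in the decider's $\bra{z^+}$ (setting $b=0$) and one $x$-effect with phase $\alpha$, and rewrites via \eruleSi, \eruleKi, \eruleBp and the derived rule \eruleA of Lemma~\ref{ZX-derive-A} until the residual state on the remaining wire is precisely the green $\alpha$-phase point $\ket{0}+e^{i\alpha}\ket{1}$, so the third party's outcome is forced to equal $\alpha$. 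The paper's version yields a slightly stronger, deterministic statement and, importantly, never has to decide whether a closed diagram is the zero scalar --- a delicate point under the paper's convention (and quantomatic's) of discarding scalars, which you acknowledge only in passing; if you take your route you must be explicit that the zero scalar, unlike $\sqrt{2}$ factors, cannot be dropped, and carry it through the rewrite (the closed diagram ends as a single green dot with phase $\alpha-\beta$ and no legs, whose value $1+e^{i(\alpha-\beta)}$ must be evaluated, not erased). What your approach buys is manifest symmetry between the two non-deciders and a direct ``impossible outcomes have amplitude zero'' reading; what the paper's buys is compatibility with its scalar-free rewriting style and reuse of \eruleA. Your remark about the three placements of the decider is fair: the paper likewise treats only the $z$--$x$--$x$ ordering explicitly and leaves the permutation symmetry of the $W_3$ diagram implicit.
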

\begin{proof}
Let $\alpha \in \{0, \pi\}$ and set $b=0$, then
 \begin{align}
\nonumber
 \ninlinegraphic[0.5]{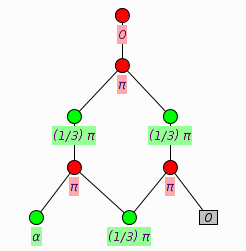} &\stackrel{\eruleSi, \eruleKi}{=}
    \ninlinegraphic[0.5]{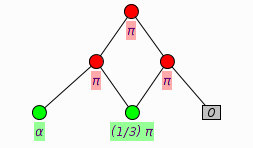} \stackrel{\eruleSi}{=}
    \ninlinegraphic[0.5]{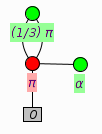} \\&\stackrel{\eruleBp}{=}
\ninlinegraphic[0.5]{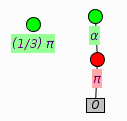} \stackrel{\eruleA}{=}
    \ninlinegraphic[0.5]{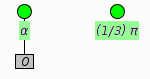}. 
\end{align} 
\end{proof}
\begin{corollary}
 The Pairwise Quantum Key Distribution with $W_3$ protocol is correct.
\end{corollary}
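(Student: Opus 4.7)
The plan is to unwind the definition of correctness and show that the set of instructions in Section~\ref{comw} indeed produces the desired behaviour -- namely that at the end of each retained run two of the three parties hold a common classical bit. This is essentially a matter of bookkeeping once Lemma~\ref{lemma:Wcorrect} is in hand.

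First I would identify precisely which runs of the protocol are retained: by step~\ref{item:basis} only runs with overall measurement basis $z$-$x$-$x$, $x$-$z$-$x$, or $x$-$x$-$z$ survive, and by step~\ref{item:outcome} (via the announcement in step~5 of the protocol) only those among these in which the $z$-measurer reports $\bra{z^+}$ are kept. On every retained run the ``desired behaviour'' is that the two $x$-measuring parties record the same classical bit, so this is exactly what remains to be verified.

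Second I would invoke Lemma~\ref{lemma:Wcorrect} directly for the $z$-$x$-$x$ case: its diagrammatic derivation shows that for either value $\alpha\in\{0,\pi\}$ of the single $x$-outcome, both $x$-measurers yield equal effects, so the two non-decider parties always hold identical bits. To cover the remaining permitted orderings $x$-$z$-$x$ and $x$-$x$-$z$ I would appeal to the symmetry of the $W_3$-state: the underlying quantum state is invariant under all permutations of its three output wires, so permuting the positions of the $z$-box and the two $x$-boxes gives a diagram that rewrites to the one treated in Lemma~\ref{lemma:Wcorrect} using only \eruleT.

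The main (mild) obstacle is that the graphical form in Eq.~\ref{eq:Wstate} is not \emph{manifestly} symmetric under all three transpositions of its outputs, so the symmetry step needs a brief justification rather than a wave of the hand. I would either re-run the proof of Lemma~\ref{lemma:Wcorrect} verbatim with the $z$-measurement plugged into each of the other two outputs -- the same sequence \eruleSi, \eruleKi, \eruleBp, \eruleA of rewrites still applies up to topology -- or argue once and for all that permuting the output wires of \eqref{eq:Wstate} produces a diagram equal to the original by \eruleT together with spider fusion \eruleSi. Combined with Lemma~\ref{lemma:Wcorrect}, this shows that every retained run yields a shared bit, and the corollary follows from the definitions of the set of instructions and the desired behaviour given in Section~1.
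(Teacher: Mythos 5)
Your proposal is correct and follows essentially the same route as the paper, which states the corollary as an immediate consequence of Lemma~\ref{lemma:Wcorrect} together with the preceding lemmas and gives no further argument. Your additional care about the other permitted basis orderings ($x$-$z$-$x$ and $x$-$x$-$z$), justified by the permutation symmetry of the $W_3$-state, fills in a step the paper leaves implicit (it only treats the case where the first two measurements are in the $z$- and $x$-basis) and is a sound and welcome refinement rather than a departure from the paper's argument.
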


\section{Conclusions}\label{conclusions}
In this paper two of the 25 plus quantum protocols in \cite{ik} have been presented in the \zxcalculus. It has been shown that they are both valid protocols, by means of easy manipulations of their diagrammatic representation. Producing the diagrammatic representation of a protocol is a matter of ``gluing'' together basic constructions, such as CNOT gates, Bell states and measurements. The more of these basic constructions are known, the easier this becomes. Many of these basic constructions can be found in the author's MSc thesis, along with examples in which they are used \cite{ik}.  After this representation has been found, the correctness of a protocol can be shown by a few simple manipulations of the diagrams. From this it can be concluded that the red/green calculus provides a clear and intuitive way to represent and check quantum protocols. 

Though not explicitily discussed here, it turns out that it is difficult to check the security of many protocols by means of the \zxcalculus. This is because detection of eavesdropping or disturbance often depends on the probability of obtaining illegal measurement outcomes after the quantum channel is interfered with. Although it is easy to plug different measurement outcomes in the diagrammatic notation, it is not always possible to calculate what combinations of measurement outcomes are possible. Moreover one cannot calculate the probability of different combinations of measurement outcomes from the diagrammatic notation, because in the implementation of the the \zxcalculus, \verb$quantomatic$ \cite{quantomatic}, scalars are ignored. 

One can conclude from this, that a presentation in either Dirac notation or the \zxcalculus is not optimal. Instead, a combined approach should be taken. 
The \zxcalculus provides a simple and intuitive way to understand the workings of the protocol and Dirac notation is helpful to check the security of protocols. 

\bibliography{EPTCS.bib}
\bibliographystyle{eptcs}
\newpage
\appendix
\section{The Graphical representation of GHZ class States} \label{ap:GHZ}

\def\padding{0.5mm}
\def\scaling{0.37}

\begin{table}[h!]
\begin{center}
 \begin{tabular}{|c|c|c|c|c|} \hline
 \# & Binary & Standard Form (SF) & Unitaries &Graphical Representation \\ \hline\hline
0 & 000 & $\frac{1}{\sqrt{2}}(\ket{000} + \ket{111})$ & $I \otimes I$& $\growbox{\padding}{\ninlinegraphic[\scaling]{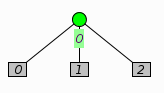}}$\\ \hline
1 & 001 & $\frac{1}{\sqrt{2}}(\ket{001} + \ket{110})$ & $I \otimes \sigma_x$&$\growbox{\padding}{\ninlinegraphic[\scaling]{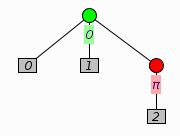}}$\\ \hline
2 & 010 & $\frac{1}{\sqrt{2}}(\ket{010} + \ket{101})$ & $\sigma_x\otimes I$&$\growbox{\padding}{\ninlinegraphic[\scaling]{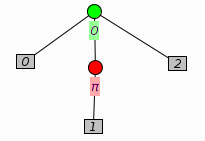}}$\\ \hline
3 & 011 & $\frac{1}{\sqrt{2}}(\ket{011} + \ket{100})$ & $\sigma_x \otimes \sigma_x$&$\growbox{\padding}{\ninlinegraphic[\scaling]{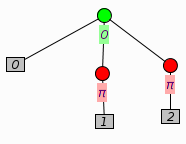}}$\\ \hline
4 & 100 & $\frac{1}{\sqrt{2}}(\ket{000} - \ket{111})$ & $\sigma_z \otimes I$&$\growbox{\padding}{\ninlinegraphic[\scaling]{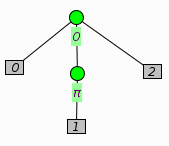}} \stackrel{\eruleS}{=}
    \growbox{\padding}{\ninlinegraphic[\scaling]{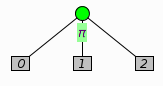}}$\\ \hline
5 & 101 & $\frac{1}{\sqrt{2}}(\ket{001} - \ket{110})$ & $\sigma_z \otimes \sigma_x$&$\growbox{\padding}{\ninlinegraphic[\scaling]{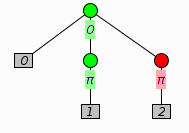}} \stackrel{\eruleS}{=}
    \growbox{\padding}{\ninlinegraphic[\scaling]{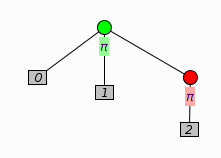}}$\\ \hline
6 & 110 & $\frac{1}{\sqrt{2}}(\ket{010} - \ket{101})$ & $i\sigma_y \otimes I$&$\growbox{\padding}{\ninlinegraphic[\scaling]{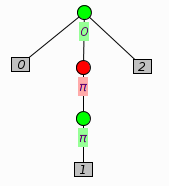}} \stackrel{\eruleKii}{=}
    \growbox{\padding}{\ninlinegraphic[\scaling]{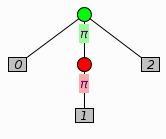}}$\\ \hline
7 & 111 & $\frac{1}{\sqrt{2}}(\ket{011} - \ket{100})$ & $ i \sigma_y \otimes \sigma_x$&$\growbox{\padding}{\ninlinegraphic[\scaling]{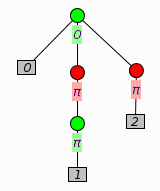}} \stackrel{\eruleS}{=}
    \growbox{\padding}{\ninlinegraphic[\scaling]{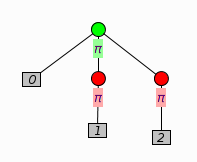}}$\\ \hline
 \end{tabular}
\end{center}

\caption{\label{T:dc}This table shows the eight different states in the GHZ class, their binary presentation, the unitaries that should be applied to the second and the third qubit to obtain this state from $\ket{GHZ}$ and finally their graphical representation.}
\end{table}

\begin{table}[h!]
\begin{center}
 \begin{tabular}{|c|c|c|c|c|}
\hline
  \# & Binary & Alternative Form (AF) & Unitaries &Graphical Representation  \\ \hline\hline
0 & 000 &$\frac{1}{\sqrt{2}}(\ket{111} + \ket{000})$&$\sigma_z \otimes \sigma_z$&$\growbox{\padding}{\ninlinegraphic[\scaling]{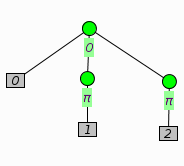}} \stackrel{\eruleS}{=}
    \growbox{\padding}{\ninlinegraphic[\scaling]{GHZ}}$\\ \hline
1 & 001 &$\frac{1}{\sqrt{2}}(\ket{110} + \ket{001})$&$\sigma_z \otimes i\sigma_y$& $\growbox{\padding}{\ninlinegraphic[\scaling]{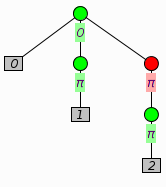}} \stackrel{\eruleS,\eruleKii}{=}
    \growbox{\padding}{\ninlinegraphic[\scaling]{GHZ2}}$\\ \hline
2 & 010 &$\frac{1}{\sqrt{2}}(\ket{101} + \ket{010})$&$ i \sigma_y \otimes \sigma_z$& $\growbox{\padding}{\ninlinegraphic[\scaling]{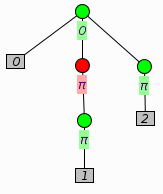}} \stackrel{\eruleS, \eruleKii}{=}
    \growbox{\padding}{\ninlinegraphic[\scaling]{GHZ3}}$\\ \hline
3 & 011 &$\frac{1}{\sqrt{2}}(\ket{100} + \ket{011})$&$i\sigma_y \otimes i \sigma_y$& $\growbox{\padding}{\ninlinegraphic[\scaling]{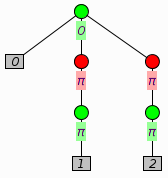}} \stackrel{\eruleS, \eruleKii}{=}
    \growbox{\padding}{\ninlinegraphic[\scaling]{GHZ4}}$\\ \hline
4 & 100 & $\frac{1}{\sqrt{2}}(\ket{111} - \ket{000})$ & $I \otimes \sigma_z$&$\growbox{\padding}{\ninlinegraphic[\scaling]{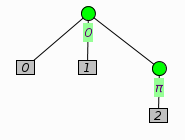}} \stackrel{\eruleS}{=}
    \growbox{\padding}{\ninlinegraphic[\scaling]{GHZ52}}$\\ \hline
5 & 101 & $\frac{1}{\sqrt{2}}(\ket{100} - \ket{001})$ & $I \otimes i \sigma_y$&$\growbox{\padding}{\ninlinegraphic[\scaling]{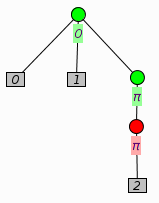}} \stackrel{\eruleS,\eruleKii}{=}
    \growbox{\padding}{\ninlinegraphic[\scaling]{GHZ62}}$\\ \hline
6 & 110 & $\frac{1}{\sqrt{2}}(\ket{101} - \ket{010})$ & $\sigma_x \otimes \sigma_z$&$\growbox{\padding}{\ninlinegraphic[\scaling]{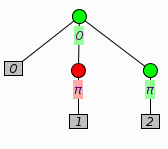}} \stackrel{\eruleS}{=}
    \growbox{\padding}{\ninlinegraphic[\scaling]{GHZ72}}$\\ \hline
7 & 111 & $\frac{1}{\sqrt{2}}(\ket{100} - \ket{011})$ & $ \sigma_x \otimes i\sigma_y$&$\growbox{\padding}{\ninlinegraphic[\scaling]{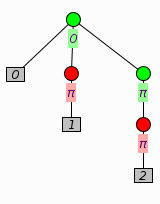}} \stackrel{\eruleS,\eruleKii}{=}
    \growbox{\padding}{\ninlinegraphic[\scaling]{GHZ82}}$\\ \hline
 \end{tabular}
\end{center}

\caption{\label{T:dc2}This table shows the remaining states in the GHZ class, their binary presentation, the unitaries that should be applied to the second and the third qubit to obtain this state from $\ket{GHZ}$ and finally their graphical representation.}
\end{table}

\end{document}